\newtheorem{thm}{Theorem}
\newtheorem{ass}{Assumption}
\begin{document}

\title{Fast Convergence Algorithm for Analog Federated Learning}

\author{Shuhao~Xia$^\dagger$~,
                Jingyang~Zhu$^\dagger$, Yuhan~Yang$^\dagger$, Yong~Zhou$^\dagger$, 
        ~Yuanming~Shi$^\dagger$~
        and Wei~Chen$^*$ \\
        $^\dagger$ School of Information Science and Technology, ShanghaiTech University, Shanghai 201210, China \\
        $^*$ Department of Electronic Engineering, BNRist, Tsinghua University, Beijing, 100084, China \\
        Email: xiashh@shanghaitech.edu.cn, shiym@shanghaitech.edu.cn
        
\thanks{}
}

\maketitle

\begin{abstract}
In this paper, we consider federated learning (FL) over a noisy fading multiple access channel (MAC), where an edge server aggregates the local models transmitted by multiple end devices through over-the-air computation (AirComp). 
To realize efficient analog federated learning over wireless channels, we  propose an AirComp-based FedSplit algorithm, where a threshold-based device selection scheme is adopted to achieve reliable local model uploading. 
In particular, we analyze the performance of the proposed algorithm  and prove that the proposed algorithm linearly converges to the optimal solutions under the assumption that the objective function is strongly convex and smooth. We also characterize the robustness of proposed algorithm to the ill-conditioned problems, thereby achieving fast convergence rates and reducing communication rounds.  A  finite error bound is further provided to reveal the relationship between the convergence behavior and the channel fading and noise. Our algorithm is theoretically and experimentally verified to be much more robust to the ill-conditioned problems with faster convergence compared with other benchmark FL algorithms. 
\end{abstract}


\section{Introduction}
 As an emerging decentralized machine learning solution, federated learning (FL) has recently attracted considerable attention from both academia and industry. 
In FL, multiple devices with their local datasets collaboratively train a global model, where only local model updates instead of raw data are transmitted to the parameter server, thereby significantly reducing the bandwidth requirement and providing additional privacy protection \cite{9084352, 8664630}. 
Most of the existing studies on FL focused on  the reduction of the volume of model exchange without explicitly taking into account the impact of wireless channels.
However, FL has a wide range of applications in wireless networks, e.g., Internet of Things (IoT) \cite{10.1145/3298981}, autonomous driving \cite{9060868}. 
Therefore, it is essential to investigate the impact of the physical characteristics (e.g., channel distortion, noise) of the wireless medium on the convergence rate and  the optimality of FL algorithms.


Many digital communication based approaches have recently been proposed to facilitate FL in wireless networks \cite{9210812, 9194337, 8851249, 9170917}, where each edge device is assigned an orthogonal channel to upload its local model. 
In particular, the authors in \cite{9210812} formulated a joint resource allocation and user selection problem for FL, where a closed-form expression for the expected convergence rate of the FL was derived to establish an explicit relationship between the packet error rates and the FL performance. 
To reduce the total learning-and-communication latency, the authors in \cite{9194337} partitioned the learning task into multiple sub-tasks, which are allocated to different edge devices for parallel training. 
In addition, the authors in \cite{8851249, 9170917} further enhanced the communication efficiency of wireless federated learning systems by proposing efficient resource management mechanisms. 
As the orthogonal channels are required to enable concurrent local model uploading to avoid interference, the aforementioned studies may not be communication-efficient, especially when the number of edge devices is large. 

To support communication-efficient design, over-the-air computation (AirComp), as a promising analog multiple access scheme, is capable of achieving ultra-fast model aggregation for FL by allowing concurrent transmission from edge devices over the same frequency channel and exploiting the waveform superposition property of MAC \cite{8952884, 8870236, 9042352, 9076343}. 
In particular, the authors in \cite{8952884} proposed a joint device selection and beamforming design to accelerate the convergence of analog federated learning. 
In \cite{8870236}, a broadband analog aggregation scheme over MAC was proposed to reduce the communication latency. However, these studies did not analyze the convergence performance of FL algorithms. 
On the other hand, the authors in \cite{9042352} proposed a  distributed stochastic gradient descent (SGD) algorithm, in which each device transmits a sparse gradient estimate over MAC. In \cite{9076343}, the authors developed the gradient based multiple access (GBMA) algorithm, which is proved to achieve the same convergence rate as the centralized gradient descent (GD) algorithm in large-scale networks. 
Although the convergence analysis was provided, the aforementioned studies suffer from a high iteration complexity and high communication overhead under the ill-conditioned setting, which is well-known to be a performance-limiting factor. 
Very recently, the authors in \cite{2020arXiv200505238P} developed the FedSplit algorithm based on the operator splitting procedure to achieve fast convergence even in the ill-conditioned setting. 
However, both the convergence rate and the optimality of the FedSplit algorithm in wireless networks has not been studied, which motivates this work. 

In this paper, we consider a wireless FL problem over a noisy fading MAC. Due to the distortion and noise caused by MAC, the performance of FL algorithms over wireless channels are significantly degraded, especially under the ill-conditioned setting. To address these issues, we propose an AirComp-based FedSplit algorithm, in which the edge server aims to recover the aggregation of local models computed by the end devices via AirComp at each communication round.  We exploit a threshold-based device selection scheme to achieve reliable communication. For strongly convex and smooth local loss functions, we prove that the proposed algorithm can linearly converge to optimal points. Furthermore, we establish an error bound in term of the expected loss of the objective function to reveal the impact of channel fading and noise over convergence behavior. Finally, our theoretical results are well verified through numerical experiments under various parameter settings.

\emph{Notations}: All vectors are considered to be column vectors. We use boldface lowercase (uppercase) letters to represent vectors (matrices). We denote the identity matrix by $\mathbf{I}$, the set of real values by $\mathbb{R}$, the cardinality of set $A$ by $|A|$ and $\ell_2$-norm of vector $\bm{x}$ by $\|\bm{x}\|$. In addition, the function $f$ is defined to be $\ell$-strongly convex, if
\begin{align*}
        f(\bm{y}) \geq f(\bm{x})+\nabla f(\bm{x})^{\top}(\bm{y}-\bm{x})+\frac{\ell}{2}\|\bm{y}-\bm{x}\|^{2}
\end{align*}
for all $\bm{x}, \bm{y}$. Similarly, the function $f$ is defined to be $L$-smooth, if 
\begin{align*}
f(\bm{y}) \leq f(\bm{x})+\nabla f(\bm{x})^{\top}(\bm{y}-\bm{x})+\frac{L}{2}\|\bm{y}-\bm{x}\|^{2}
\end{align*}
for all $\bm{x}, \bm{y}$.

\section{System Model and Problem Formulation} \label{sec: system model}
\subsection{Federated Optimization}
We  consider a federated edge learning system consisting of $N$ single-antenna edge devices indexed by set $\mathcal{N} = \{1, 2, \ldots, N\}$ and a computing enabled edge server equipped with a single antenna, as illustrated in Fig. \ref{fig: system model}. Each device $n$ is associated with its own local dataset $\mathcal{D}_n$, and all edge devices collaboratively learn a shared global model by communicating with the edge server. 

In federated learning systems, the goal is to learn a shared global model by minimizing the sum of the devices' local loss function. Therefore, the problem can be formulated as the following consensus federated optimization problem:
\begin{align}\label{eq: problem}
        \begin{array}{ll}
\underset{\bm{\theta}, \{\bm{\theta}_n\}_{n=1}^N }{\operatorname { minimize }} & F(\bm{\theta})\triangleq\sum_{n=1}^{N} f_{n}\left(\bm{\theta}\right) \\
\operatorname { subject~to } & \bm{\theta}_{n} = \bm{\theta}, \forall n \in \mathcal{N},
\end{array}
\end{align}
where $ \bm{\theta} \in \mathbb{R}^d$ is the global model with dimension $d$. For each device $n$, $\bm{\theta}_n \in \mathbb{R}^d $ is the local model and $f_n$ is the local loss function defined by the learning task and the local dataset $\mathcal{D}_n$. 

\begin{figure}[tbp]
        \centering
        \includegraphics[width=1\linewidth]{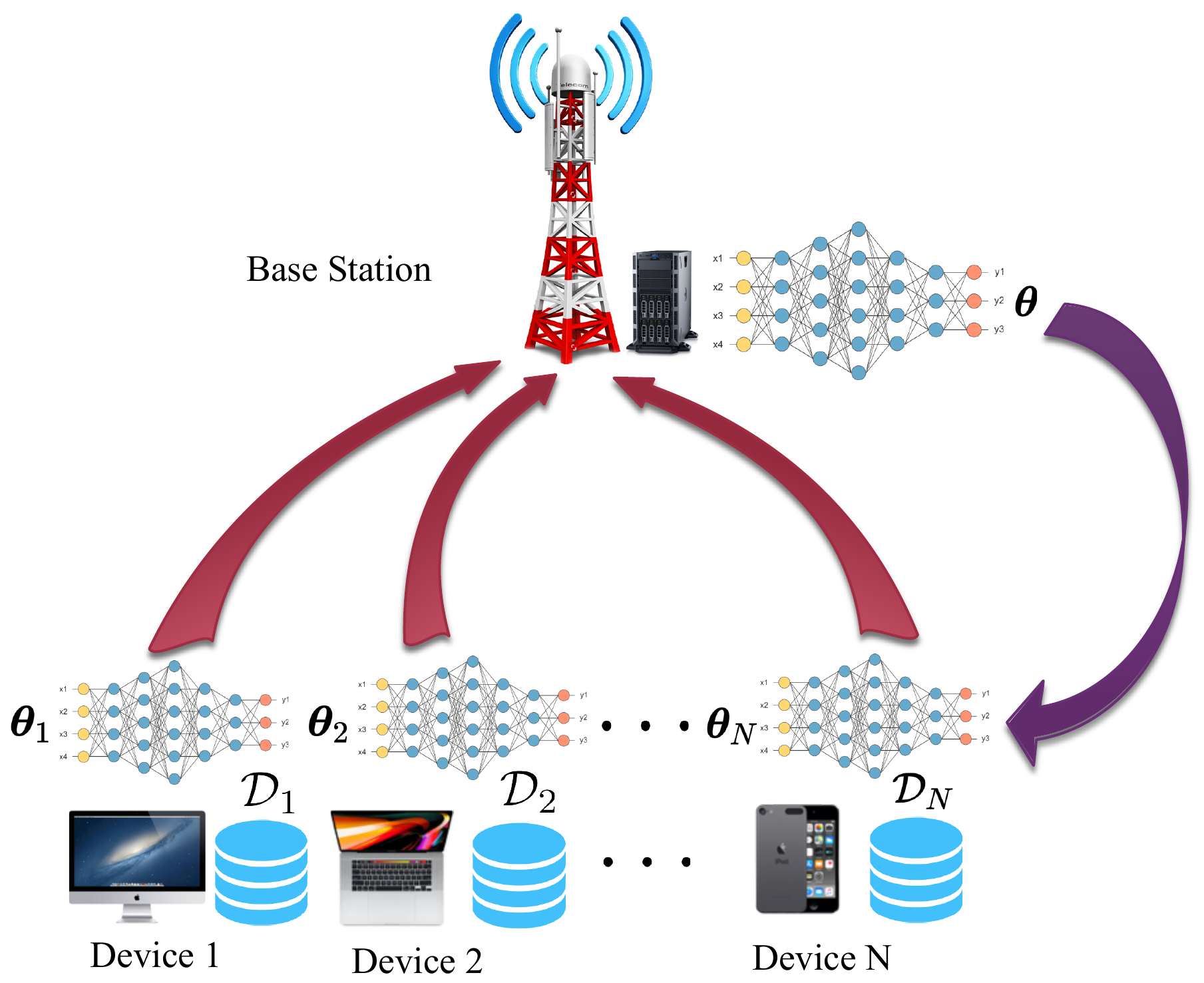}
        \caption{Illustration of the wireless federated edge learning system consisting of $N$ edge devices and one edge server.}
        \label{fig: system model}
        \vspace{-0.7cm}
\end{figure}

\subsection{FedSplit}
To solve problem \eqref{eq: problem}, we adopt the Fedsplit algorithm proposed in \cite{2020arXiv200505238P}, which is based on the operator splitting procedures. As depicted in Fig. \ref{fig: system model}, at each $t$-th communication round, the edge server broadcasts the current global model $\bm{\theta}^t$ to all edge devices via the downlink channel through the digital communication. Hence, the power constraint of the edge server is not as strict as the edge devices, and the downlink communication is assumed to be error-free \cite{9076343, 9042352, 9014530}. Consequently, each device receives the current shared global model $\bm{\theta}^t$ without distortion. Based on the received current global model $\bm{\theta}^t$ and the local dataset $\mathcal{D}_n$, device $n$ updates the local model $\bm{\theta}^{t+1}_n$ with two steps as follows.
\begin{enumerate}
        \item Local prox step:
        \begin{align}\label{eq: local prox step}
                \bm{\theta}^{t+1/2}_n \triangleq \operatorname{prox}_{s,n}(2\bm{\theta}^t - \bm{\theta}^{t}_n),
        \end{align}
        \item Local centering step:
        \begin{align}\label{eq: local centering step}
                \bm{\theta}^{t+1}_n \triangleq \bm{\theta}^{t}_n + 2(\bm{\theta}^{t+1/2}_n  - \bm{\theta}^t ),
        \end{align}
\end{enumerate}
where the proximal operator $\operatorname{prox}_{s,n}$ is defined by 
\begin{align}\label{eq: proximal operator}
        \operatorname{prox}_{s,n}(\bm{z}) \triangleq \underset{\bm{x}\in\mathbb{R}^d}{\operatorname{arg~min}}~ f_n(\bm{x}) + \frac{1}{2s}\|\bm{z-x}\|^2_2,
\end{align}
for some step size $s>0$.

After updating local models, the edge devices transmit a function of the local model over a wireless fading multiple access channel (MAC) to the edge server. By exploiting over-the-air computation (AirComp), the edge server aggregates all devices local models in one channel use \cite{8952884}, which significantly reduces communication latency. Based on the received signals, the edge server is able to obtain an estimate $\bm{\hat{\theta}}^{t+1}$ of the average global model
\begin{align}\label{eq: average global model}
        \bm{\theta}^{t+1} \triangleq \frac{1}{N} \sum_{n=1}^N \bm{\theta}^{t+1}_n.
\end{align}
The whole procedure will continue until meeting a convergence condition. 

Consider the case when each local loss function $f_n$ is both $\ell_n$-strongly convex and $L_n$-smooth. We define the condition number of the problem $\kappa = \frac{L^*}{\ell_*}$ where $\ell_* = \min _{n\in N} \ell_{n}$ is the smallest strong convexity constant and $L^* = \max _{n\in N} L_{n}$ is the largest Lipschitz constant. Then we have the following results, which has been proved in \cite[Section 5]{2020arXiv200505238P}.
\begin{thm}\label{thm: theorem 1}
Assuming that $\{\bm{\theta}_n^*\}$ are fixed points for the \textbf{Fedsplit} algorithm. Then
 for any initialization $\bm{\theta}^1 \in \mathbb{R}^d$ and step size $s=1/\sqrt{\ell_*L^*}$, 
 \begin{enumerate}
        \item the algorithm has an optimal solution $\bm{\theta}^* = \frac{1}{N} \sum_{n=1}^N\bm{\theta}_n^*$ to the problem \eqref{eq: problem};
        \item the iterates \eqref{eq: average global model} satisfy
        \begin{align*}
                \left\|\bm{\theta}^{(t+1)}-\bm{\theta}^*\right\| \leq \left(1-\frac{2}{\sqrt{\kappa}+1}\right)^{t}\sqrt{\delta_0},
        \end{align*}
        where $\delta_0 = \frac{1}{N}\sum_{n=1}^N\left\|\bm{\theta}_n^{0}-\bm{\theta}_n^*\right\|^2$;
        \item the iteration complexity is
        \begin{align*}
                T(\epsilon, \kappa)=O\left(\sqrt{\kappa} \log (1 / \epsilon)\right),
        \end{align*}
        to achieve an $\epsilon$-accurate solution, i.e., $\|\bm{\theta}^T - \bm{\theta}^*\|\leq \epsilon$.
                
 \end{enumerate}
\end{thm}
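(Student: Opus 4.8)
The plan is to recognize the FedSplit updates \eqref{eq: local prox step}--\eqref{eq: local centering step} as an instance of Peaceman--Rachford operator splitting applied to a lifted consensus reformulation of \eqref{eq: problem}, and then to extract a contraction modulus from a careful analysis of the reflected proximal operators. First I would stack the local variables into $\bm{z}=(\bm{\theta}_1,\dots,\bm{\theta}_N)\in\mathbb{R}^{Nd}$ and rewrite \eqref{eq: problem} as the unconstrained minimization of $f(\bm{z})+g(\bm{z})$, where $f(\bm{z})=\sum_{n=1}^N f_n(\bm{\theta}_n)$ is block separable and $g$ is the indicator of the consensus subspace $\mathcal{C}=\{\bm{z}:\bm{\theta}_1=\cdots=\bm{\theta}_N\}$. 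A direct computation shows $\operatorname{prox}_{sg}$ is the orthogonal projection onto $\mathcal{C}$, i.e., block-wise averaging, while $\operatorname{prox}_{sf}$ acts block-wise through the operators $\operatorname{prox}_{s,n}$ of \eqref{eq: proximal operator}. Under this identification, one round \eqref{eq: local prox step}--\eqref{eq: local centering step} followed by the aggregation \eqref{eq: average global model} is exactly the fixed-point iteration $\bm{z}^{t+1}=T(\bm{z}^t)$ with $T\triangleq R_g\circ R_f$, where $R_h\triangleq 2\operatorname{prox}_{sh}-\mathbf{I}$ denotes the reflected resolvent of $h$.

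The heart of the argument is to bound the Lipschitz constant of $T$. Since $\mathcal{C}$ is a subspace, $R_g$ is an orthogonal reflection and hence an isometry, so $\|T(\bm{z})-T(\bm{z}')\|\le\|R_f(\bm{z})-R_f(\bm{z}')\|$, and it remains to show $R_f$ is a contraction. Because $f$ is block separable, I would reduce to a single block and prove that for an $\ell_n$-strongly convex, $L_n$-smooth $f_n$ the reflected resolvent is Lipschitz with modulus $\max\{(1-s\ell_n)/(1+s\ell_n),\,(sL_n-1)/(sL_n+1)\}$; this follows from the strong-monotonicity and co-coercivity estimates that the strong convexity and smoothness inequalities in the Notations impose on $\operatorname{prox}_{s,n}$. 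Taking the worst case over blocks yields a global contraction factor $\gamma(s)=\max\{(1-s\ell_*)/(1+s\ell_*),\,(sL^*-1)/(sL^*+1)\}$ for $R_f$, and therefore for $T$.

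Next I would optimize the step size. The two terms in $\gamma(s)$ are, respectively, decreasing and increasing in $s$, so the minimizer equalizes them; setting $(1-s\ell_*)/(1+s\ell_*)=(sL^*-1)/(sL^*+1)$ gives $s=1/\sqrt{\ell_*L^*}$ and the optimal factor $\gamma^*=(\sqrt{\kappa}-1)/(\sqrt{\kappa}+1)=1-2/(\sqrt{\kappa}+1)$, matching the stated rate. Because $\gamma^*<1$, the Banach fixed-point theorem supplies a unique fixed point $\bm{z}^*=(\bm{\theta}_1^*,\dots,\bm{\theta}_N^*)$ together with the linear bound $\|\bm{z}^t-\bm{z}^*\|\le(\gamma^*)^t\|\bm{z}^0-\bm{z}^*\|$. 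For claim~(1) I would verify that the fixed-point relation for $T$ coincides with the Karush--Kuhn--Tucker conditions of \eqref{eq: problem}: it forces $\operatorname{prox}_{sg}(\bm{z}^*)$ to be feasible and stationary, so its common block value $\bm{\theta}^*=\frac{1}{N}\sum_{n=1}^N\bm{\theta}_n^*$ is the minimizer.

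Finally I would transfer the bound to the averaged iterate. Since the aggregation \eqref{eq: average global model} is averaging and $\|\frac{1}{N}\sum_n\bm{v}_n\|^2\le\frac{1}{N}\sum_n\|\bm{v}_n\|^2$, we obtain $\|\bm{\theta}^{t+1}-\bm{\theta}^*\|^2\le\frac{1}{N}\sum_{n=1}^N\|\bm{\theta}_n^{t+1}-\bm{\theta}_n^*\|^2\le(\gamma^*)^{2t}\,\delta_0$ with $\delta_0=\frac{1}{N}\|\bm{z}^0-\bm{z}^*\|^2$, which is claim~(2). Claim~(3) is then immediate from $(\gamma^*)^T\sqrt{\delta_0}\le\epsilon$ and $\log(1/\gamma^*)=\Theta(1/\sqrt{\kappa})$, yielding $T=O(\sqrt{\kappa}\log(1/\epsilon))$. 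I expect the main obstacle to be the sharp contraction estimate for $R_f$: obtaining the exact modulus rather than a crude nonexpansiveness bound is precisely what produces the $\sqrt{\kappa}$ (not $\kappa$) dependence, and it requires exploiting the strong-convexity lower bound and the smoothness upper bound on each $f_n$ simultaneously.
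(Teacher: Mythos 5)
Your proposal is correct and follows essentially the same route as the proof the paper relies on: the paper does not prove Theorem~1 itself but defers to \cite[Section 5]{2020arXiv200505238P}, where FedSplit is likewise cast as Peaceman--Rachford splitting on the consensus reformulation, the reflected resolvent of each $\ell_n$-strongly convex, $L_n$-smooth block is shown to contract with modulus $\max\{(1-s\ell_n)/(1+s\ell_n),(sL_n-1)/(sL_n+1)\}$, and equalizing the two terms yields $s=1/\sqrt{\ell_*L^*}$ and the rate $1-2/(\sqrt{\kappa}+1)$. The only nit is that one round of \eqref{eq: local prox step}--\eqref{eq: local centering step} is $\bm{z}^{t+1}=R_f\circ R_g(\bm{z}^t)$ rather than $R_g\circ R_f(\bm{z}^t)$, but since $R_g$ is an isometry both compositions have the same contraction factor, so nothing in your argument changes.
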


\emph{Remark 1}: To address problem \eqref{eq: problem}, a number of different methods have been proposed, e.g., FedAvg \cite{2016arXiv160205629B} and FedProx \cite{2018arXiv181206127L}. However, these methods guarantee convergence to fixed points, but not necessarily optimal solutions, even in strongly convex settings \cite{2020arXiv200505238P, 2020arXiv200401442M}. Furthermore, the iteration complexities of these algorithms will increase significantly when the problem becomes ill-conditioning. In constrast, the FedSplit algorithm enjoys global optimality and linear convergence rate when the local loss functions are strongly convex. In addition, the FedSplit algorithm is robust to the condition number of the problem. Nevertheless, the authors in \cite{2020arXiv200505238P} consider both uplink and downlink channels are error-free, which is indeed unpractical due to limited communication resources in wireless federated learning. Hence, in this paper, we study a more practical implementation of the FedSplit algorithm over a noisy fading MAC.

\subsection{Communication Protocol}

In this paper, all edge devices communicate with the edge server over the shared wireless MAC channel via AirComp. In this case, coding is unnecessary for an AirComp system to achieve the optimal tradeoff between computation rate and accuracy \cite{4305404}. Hence, this paper adopts an uncoded nonorthogonal multiple access (NOMA) protocol.
 Under this setup, we assume a block flat-fading channel, where the channel coefficient remains same during one communication block. Each block is assumed to contain $d$ time slots, so that a $d$-dimensional local model is allowed to transmit within one block. Due to limited memory and computational capacity, the models on edge devices are usually tiny, and may consist of only thousands of parameters \cite{ravi2019efficient}. Since typical coherence blocks also have the same order of magnitude \cite{8608340}, it is possible to transmit a model vector in one transmission block. For large model dimensions, we can transmit the models during multiple consecutive coherence blocks, which will slightly affect the analysis. Hence, this paper mainly focuses on the former case.

In this paper, the number of blocks is assumed to be equal to the number of iterations, so that all devices upload their local models in the $t$-th iteration corresponding to the $t$-th block. Then, the received signal at the edge server is given by
\begin{align}\label{eq: MAC channel}
        \bm{y}^t = \sum_{n=1}^N h_n^t\bm{x}_n^t + \bm{w}^t,
\end{align}
where $h_n^t \in \mathbb{C}$ is the channel coefficient for device $n$ in the $t$-th block;  $\bm{w}_n^t \in \mathbb{C}^d$ denotes the additive noise i.i.d according to $\mathcal{CN}(0, \sigma^2_w\mathbf{I})$; and transmitted signal $\bm{x}_n^t$ encodes the information about the local model $\bm{\theta}^t_n$. In addition, the transmit power constraint of each device is given by 
\begin{align}\label{eq: power constraint}
        \mathbb{E}\left[\|\bm{x}_n^t\|^2_2\right] \leq P_0,\quad \forall n \in \mathcal{N},
\end{align}
with $P_0$ as the maximum transmit power.

Based on the received signal $\bm{y}^t$, the edge server needs to recover the average global model \eqref{eq: average global model} via AirComp. However, due to the distortion and noise caused by wireless channels, the edge server can only use the perturbed information about local models received from the devices to update the global model. In addition, these factors will greatly affect the convergence of the FedSplit algorithm. Hence, this paper aims to develop a reliable transceiver strategy based on the FedSplit algorithm in wireless communication systems. The strategy includes precoding local models at edge devices and recovering the average global model at edge server. In the following, we will propose the \textbf{AirComp based Fedsplit Algorithm}, and then provide convergence analysis in section \ref{sec: convergence analysis}.

\subsection{AirComp Based FedSplit Algorithm}
In this paper, we assume that perfect CSI are available on all devices and the edge sever, which can be achieved by pilot based methods. For implementation of AirComp, each device is required to perform magnitude alignment to reduce the received signal to the desired average global model \eqref{eq: average global model}. By exploiting the knowledge of CSI, each device is able to implement channel inversion by multiplying the local model by its inverse channel coefficient. Specifically, in the $t$-th iteration, device $n$ encodes its local model $\bm{\theta}^t_n$ into the transmitted signal $\bm{x}^t_n$ via 
\begin{align}\label{eq: transmitted signal}
        \bm{x}^t_n \triangleq \sqrt{\alpha^t}\frac{(h^t_n)^\mathrm{H}}{|h^t_n|^2}\bm{\theta}_n^t,
\end{align}
where $\sqrt{\alpha^t}$ is a uniform scaling factor in the $t$-th iteration. The uniform scaling factor $\sqrt{\alpha^t}$ satisfies the power constraint \eqref{eq: power constraint}, likely, 
\begin{align}\label{eq: power constraint 1}
        \|\bm{x}_n^t\|^2_2 = \left\|\sqrt{\alpha^t}\frac{(h^t_n)^\mathrm{H}}{|h^t_n|^2}\bm{\theta}^t_n\right\|^2_2 \leq P_0, \forall n,
\end{align}
which implies $\sqrt{\alpha^t} \triangleq \min_{n\in\mathcal{N}} \frac{|h_n^t|\sqrt{P_0}}{\|\bm{\theta}_n^t\|_2}$.
However, it is obvious to note that weak channels (i.e., $|h^n_t| \approx 0$) results in the small scaling factor $\sqrt{\alpha^t}$. Consequently, the received signals will be weakened and the interference caused by channel noise will significantly increase. This suggests that uniform channel inversion may not be always desirable and the optimal power-control policy for AirComp should be adapted to multiuser CSI. Therefore, we propose a binary scheme of device selection based on multiuser CSI. In particular, a threshold $\gamma$ is set for device selection, and edge devices observing fading coefficients of a smaller magnitude than $\gamma$ do not transmit in the corresponding communication round. Under this scheme, the transmitted signals \eqref{eq: transmitted signal} become
\begin{align}\label{eq: transmitted signal 1}
        \bm{x}^t_n = \sqrt{\alpha^t}\beta^t_n\frac{(h^t_n)^\mathrm{H}}{|h^t_n|^2}\bm{\theta}_n^t,
\end{align}
where the indicator of device selection $\beta^t_n$ is defined by
\begin{align}\label{eq: device selection}
        \beta^t_n = \left\{
        \begin{array}{ll}
                0, &|h^t_n| < \gamma; \\
                1, &|h^t_n| \geq \gamma,
        \end{array}
        \right.
\end{align}
with some threshold $\gamma \geq 0$. Hence, the scaling factor becomes $\sqrt{\alpha^t} = \min_{n\in\mathcal{B}} \frac{|h_n^t|\sqrt{P_0}}{\|\bm{\theta}_n^t\|_2}$.

To simplify, we denote $\mathcal{B}^t \subseteq \mathcal{N}$ as the set of participating devices indices subject to $\beta^t_n = 1$. Since the edge server is assumed to know all CSI, it also knows $\mathcal{B}^t$ by \eqref{eq: device selection}. Hence, the average global model \eqref{eq: average global model} can be recovered by the edge server as follows, 
\begin{align} \label{eq: recover}
\begin{array}{ll}
        \hat{\bm{\theta}}^{t+1} &\triangleq \frac{1}{\sqrt{\alpha^t}|\mathcal{B}^t|}\bm{y}^t \\
        &= \frac{1}{\sqrt{\alpha^t}|\mathcal{B}^t|}\left(\sum_{n=1}^N h_n^t\sqrt{\alpha^t}\beta^t_n\bm{\theta}_n^t+ \bm{w}^t\right) \\
        &= \frac{1}{|\mathcal{B}^t|}\sum_{n\in\mathcal{B}^t} \bm{\theta}_n^t + \frac{\bm{w}^t}{\sqrt{\alpha^t}|\mathcal{B}^t|}\\
        &= \frac{1}{|\mathcal{B}^t|}\sum_{n\in\mathcal{B}^t} \bm{\theta}_n^t + \tilde{\bm{w}}^t \\
        &= \bar{\bm{\theta}}^t + \tilde{\bm{w}}^t,
\end{array}
\end{align}
where $\bar{\bm{\theta}}^t \triangleq \frac{1}{|\mathcal{B}^t|}\sum_{n\in\mathcal{B}^t} \bm{\theta}_n^t$ and $\bm{\tilde{w}}^t$ is the equivalent additive noise according to $\mathcal{CN}(0, \frac{\sigma^2_w}{\alpha^t|\mathcal{B}^t|^2}\mathbf{I})$. The resulting algorithm with $T$ communication rounds is summarized in Algorithm \ref{algorithm 1}.

\begin{algorithm}[tbp]
  \caption{AirComp based FedSplit Algorithm} 
  \label{algorithm 1}
        \SetAlgoLined
        \SetKwInOut{Input}{Input}
        \SetKwInOut{Output}{Output}
        \SetKwFor{ParFor}{for each}{do in parallel}{end}
        \Input{Initial $\bm{\theta}^0$, threshold $\gamma$, step size $s$,  max number of rounds $T$}
        Initialization for each device: $\bm{\theta}^0_n = \bm{\theta}^0, n \in \mathcal{N} $ with the initial $\bm{\theta}^0$\;
        \For{$t=0,1,\ldots,T$}{
        All devices receive the current estimate $\bm{\hat{\theta}}^t$ \;
        \ParFor{$n \in \mathcal{N}$}{
        Updating $\bm{\theta}^{t+1}_n$ via \eqref{eq: local prox step} and \eqref{eq: local centering step} \;
        Checking the channel state $h^t_n$ and determining $\beta^t_n$ via \eqref{eq: device selection} \;
        \If{$\beta^t_n = 1$}{
        Transmitting $\bm{x}^t_n$ encoded via \eqref{eq: transmitted signal 1} over the MAC \eqref{eq: MAC channel}\;
        }
        }
        The edge server receives $\bm{y}^t$, recovers $\bm{\theta}^{t+1}$ via \eqref{eq: recover}, and then broadcasts $\bm{\hat{\theta}}^{t+1}$ back to all the devices via an error-free channel \;
        }
\end{algorithm}

\section{Convergence Analysis} \label{sec: convergence analysis}
In this section, we will provide the convergence analysis of the AirComp based Fedsplit algorithm and prove that it can converge to the global optimal solution under strongly convex and smooth local loss functions. 

The main strategy of our proof is to introduce two sequences $\{\bm{\theta}^t\}$ and $\{\hat{\bm{\theta}}^t\}$ generated by \eqref{eq: average global model} and \eqref{eq: recover}, respectively. While the sequence $\{\hat{\bm{\theta}}^t\}$ is perturbed by the channel gain and noise, we still can establish a single step recursive bound for the error $\mathbb{E}\|\hat{\bm{\theta}}^t-\bm{\theta}^t\|^2$. Then by exploiting the results of Theorem \ref{thm: theorem 1}, we further characterize the convergence of $\{\bm{\theta}^t\}$. Before presenting our main results, we first have the following assumptions that our analysis is based on.

\subsection{Preliminaries}
\begin{ass}\label{thm: assumption 1}
        The local loss function $f_n$ is both $\ell_n$-strongly convex and $L_n$-smooth for any $n \in \mathcal{N}$.
\end{ass}

\begin{ass}\label{thm: assumption 2}
        The local model is bounded by a universal constant $G>0$, likely, $\|\bm{\theta}^t_n\|^2 \leq G^2, \forall t, n$.
\end{ass}

\begin{ass}\label{thm: assumption 3}
        At each communication iteration, the set of participating devices $B_t$ satisfies $|B_t|=B \leq N$ and is uniformly distributed over all the subsets of $\mathcal{N}$.
\end{ass}
Assumptions \ref{thm: assumption 1} and \ref{thm: assumption 2} are commonly used in analyzing FL algorithm for many learning-based tasks, i.e., linear regression and logistic regression. Assumption \ref{thm: assumption 3} can be implemented by the following distributed mechanism. At each communication round $t$, we choose the top $B$ devices among the participating device set $\mathcal{B}^t$ in term of their CSI, i.e., $|h^t_n|$, to transmit their signals. If the event of $|\mathcal{B}_t| < B$ happens, the devices need wait to the next communication round. Actually, the probability of the event is very small especially when $N$ is large and $\gamma$ is small. This mechanism guarantees $|\mathcal{B}_t| = B$ at each iteration. Notice that Assumption \ref{thm: assumption 3} is only used for convergence analysis. In fact, when AirComp based Fedsplit is implemented without such a mechanism, i.e., $|\mathcal{B}_t|$ is random, it will achieve similar convergence characteristics. The authors in \cite{2020arXiv200912787S} have made similar assumptions.

\subsection{Main Results}
Based on the above assumptions, we present the convergence of the sequence $\{\hat{\bm{\theta}}^t\}$ as follows.
\begin{thm}\label{thm: theorem 2}
        Consider the system model specified in Section \ref{sec: system model}. Let $\bm{\theta}^*$ denote the solution of the optimization problem \eqref{eq: problem}. When Assumptions (\ref{thm: assumption 1}-\ref{thm: assumption 3}) holds and setting the step size $s=\frac{1}{\sqrt{\ell_{*} L^{*}}}$, then it holds that
        \begin{align}
        \mathbb{E}[F(\hat{\bm{\theta}}^t)]  - F(\bm{\theta}^*) \leq \frac{\delta_0}{2L}\rho^{t} + \frac{G^2}{2B^2L}\left(B + \frac{d\sigma_w^2}{\gamma^2P_0}\right),
        \end{align}
        where $\delta_0 = \frac{1}{N}\sum_{n=1}^N\left\|\bm{\theta}_n^{0}-\bm{\theta}_n^*\right\|^2$, $\rho = \left(1-\frac{2}{\sqrt{\kappa}+1}\right)^2$ and $L = \sum_{n=1}^NL_n$.
\end{thm}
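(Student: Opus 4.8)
The plan is to split the recovered iterate's suboptimality into an optimization error controlled by Theorem~\ref{thm: theorem 1} and a channel-induced perturbation floor, following the two-sequence strategy sketched above. I would first use the $L$-smoothness of $F=\sum_{n=1}^N f_n$ (whence $L=\sum_{n=1}^N L_n$, since a sum of $L_n$-smooth functions is $(\sum_n L_n)$-smooth) together with the optimality condition $\nabla F(\bm{\theta}^*)=\bm{0}$ to expand $F(\hat{\bm{\theta}}^t)$ around the noiseless full-participation average $\bm{\theta}^t$ of \eqref{eq: average global model} rather than around $\bm{\theta}^*$ directly. Conditioning on the history through round $t-1$, Assumption~\ref{thm: assumption 3} makes $\mathcal{B}^t$ a uniform size-$B$ subset, so $\bar{\bm{\theta}}^t$ is unbiased for $\bm{\theta}^t$, and $\tilde{\bm{w}}^t$ is zero-mean and independent of the selection; hence the first-order term drops in expectation, leaving $\mathbb{E}[F(\hat{\bm{\theta}}^t)]-F(\bm{\theta}^*)\le\big(\mathbb{E}[F(\bm{\theta}^t)]-F(\bm{\theta}^*)\big)+\tfrac{L}{2}\,\mathbb{E}\|\hat{\bm{\theta}}^t-\bm{\theta}^t\|^2$. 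Expanding around $\bm{\theta}^t$ is what avoids a lossy triangle-inequality split with a spurious factor of two.

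Next I would bound the perturbation as $\mathbb{E}\|\hat{\bm{\theta}}^t-\bm{\theta}^t\|^2=\mathbb{E}\|\bar{\bm{\theta}}^t-\bm{\theta}^t\|^2+\mathbb{E}\|\tilde{\bm{w}}^t\|^2$, the cross term again vanishing by independence and zero mean. For the sampling term, the without-replacement variance identity gives $\mathbb{E}\|\bar{\bm{\theta}}^t-\bm{\theta}^t\|^2=\frac{N-B}{B(N-1)}\cdot\frac1N\sum_{n=1}^N\|\bm{\theta}_n^t-\bm{\theta}^t\|^2\le\frac{G^2}{B}$, using $\frac{N-B}{N-1}\le1$ and $\|\bm{\theta}_n^t\|^2\le G^2$ (Assumption~\ref{thm: assumption 2}). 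For the noise term the equivalent variance is $\frac{d\sigma_w^2}{\alpha^t|\mathcal{B}^t|^2}$, and the crucial role of the threshold rule \eqref{eq: device selection} is that every participating device obeys $|h_n^t|\ge\gamma$, so $\alpha^t=\min_{n\in\mathcal{B}^t}\frac{|h_n^t|^2P_0}{\|\bm{\theta}_n^t\|^2}\ge\frac{\gamma^2P_0}{G^2}$, giving $\mathbb{E}\|\tilde{\bm{w}}^t\|^2\le\frac{d\sigma_w^2G^2}{\gamma^2P_0B^2}$. Together these yield $\mathbb{E}\|\hat{\bm{\theta}}^t-\bm{\theta}^t\|^2\le\frac{G^2}{B^2}\big(B+\frac{d\sigma_w^2}{\gamma^2P_0}\big)$, which up to the smoothness prefactor is exactly the bracketed perturbation floor in the theorem.

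For the optimization term I would bound $\mathbb{E}[F(\bm{\theta}^t)]-F(\bm{\theta}^*)$ by smoothness once more and then invoke Theorem~\ref{thm: theorem 1} on the noiseless sequence to get $\|\bm{\theta}^t-\bm{\theta}^*\|^2\le\rho^t\delta_0$ with $\rho=(1-\frac{2}{\sqrt{\kappa}+1})^2$, producing the $\rho^t\delta_0$ contribution, and assemble the two pieces. I expect the main obstacle to be the two-sequence coupling: the actual local models are driven by the noisy feedback $\hat{\bm{\theta}}^t$, so justifying that the auxiliary noiseless average $\bm{\theta}^t$ still obeys the clean contraction of Theorem~\ref{thm: theorem 1} while the single-round perturbation does \emph{not} accumulate across iterations is the delicate point—this is precisely the single-step recursive bound the proof must establish. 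A secondary technicality is reconciling the complex channel and noise of \eqref{eq: MAC channel} with the real model $\bm{\theta}\in\mathbb{R}^d$ when evaluating $\mathbb{E}\|\tilde{\bm{w}}^t\|^2$, and verifying the unbiasedness and conditional independence needed for the first-order term to vanish.
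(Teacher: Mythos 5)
Your proposal follows essentially the same route as the paper: the same unbiasedness argument to kill the cross/linear term, the same without-replacement sampling variance bound $\mathbb{E}\|\bar{\bm{\theta}}^t-\bm{\theta}^t\|^2\le G^2/B$ and threshold-based lower bound $\alpha^t\ge \gamma^2 P_0/G^2$ for the noise floor, and the same invocation of Theorem~\ref{thm: theorem 1} for the noiseless contraction $\|\bm{\theta}^t-\bm{\theta}^*\|^2\le\rho^t\delta_0$; expanding $F$ around $\bm{\theta}^t$ and dropping the first-order term is just an equivalent reshuffling of the paper's single application of smoothness at $\bm{\theta}^*$ followed by the Pythagorean split of $\mathbb{E}\|\hat{\bm{\theta}}^t-\bm{\theta}^*\|^2$. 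The coupling issue you flag at the end is real---the local iterates are driven by the noisy feedback $\hat{\bm{\theta}}^t$, so the clean contraction for the auxiliary sequence does not follow verbatim from Theorem~\ref{thm: theorem 1}---but the paper's proof elides it in exactly the same way by directly citing the noiseless FedSplit result for $\mathcal{A}_2$ (and note that both your argument and the paper's actually yield a prefactor of $L/2$, not the $1/(2L)$ appearing in the theorem statement).
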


\begin{proof}
        See Section \ref{sec: proof of theorem 1}.
\end{proof}

\emph{Remark 2}: Theorem 2 shows that it is able to achieve the convergence rate of the FedSplit algorithm, i.e., linear convergence. Note that the iteration complexity remains $T(\epsilon, \kappa)=O\left(\sqrt{\kappa} \log (1 / \epsilon)\right)$ as claimed in Theorem \ref{thm: theorem 1}, whereas the GD based algorithms for the wireless FL problem proposed in \cite{9076343, 9042352} are linearly dependent of the condition number, i.e., $T(\epsilon, \kappa)=O\left(\kappa \log (1 / \epsilon)\right)$. Hence, we can conclude that our algorithm is more robust to the ill-conditioned problems. What's more, Theorem 2 establishes a finite bound of the estimation error for strongly convex and smooth local loss functions over fading MAC. The error bound is characterized by two terms, the \emph{initial distance} due to the error in the initial estimate and the \emph{additive noise} caused by the channel noise. In the case of error-free channels, our algorithm can be reduced to the FedSplit algorithm, and thus achieves the same performance. In addition, the design of the threshold $\gamma$ will greatly affect the additive noise term, which will be discussed in our future work.

\section{Numerical Experiments}
In this section we numerically evaluate the performance of AirComp based FedSplit Algorithm by presenting a typical example, i.e., linear regression. Local dataset of each device $\mathcal{D}_n$ is randomly generated by linear model
\begin{align}\label{eq: linear model}
        Y_{n} = X_{n}\bm\theta_{0}+v_{n}, \forall n \in \mathcal{N},
\end{align}
where $Y_{n}\in\mathbb{R}^{m_{n}}$ is a output vector with $m_{n}$ elements related to the design matrix $X_{n}$, $\bm\theta_{0} \in \mathbb{R}^d$ is generated by sampling from the standard Gaussian distribution $\mathcal{N}(0, \mathbf{I}_{d})$ and the noise vectors are independently generated according to $\mathcal{N}(0, \sigma^2 \mathbf{I}_{m_{n}})$. The details of generating $X_n$ will be discussed in Sections \ref{sec: well-conditioned} and \ref{sec: ill-conditioned}. We use a linear least square loss function for each device $n$, given by
\begin{align}\label{eq: loss function}
        f_{n}(\bm\theta) = \frac{1}{2}\left\|Y_{n} - X_{n}\bm\theta\right\|^2,
\end{align}
which is strongly convex and differentiable.

We evaluate the algorithms in term of the expected loss of the objective values by running an algorithm, i.e., $\mathbb{E}[F(\hat{\bm{\theta}}^t)] - F(\bm{\theta^{*}})$, where $\bm\theta^{*}$ is the solution to federated optimization problem \eqref{eq: problem}. In addition, we compare the AirComp based FedSplit algorithm with the Gradient Based Multiple Access (GBMA) algorithm proposed in \cite{9076343}, which is also developed to solve the federated learning problem over MAC. All the experiments will be performed $p$ times, and we take the average of the results.
In the following, we consider two different settings for problem conditioning.

\subsection{Well-conditioned Setting}\label{sec: well-conditioned}
In this case, we generate random matrices $X_n \in \mathbb{R}^{m_{n} \times d}$ with $(X_n)_{uv} \overset{\mathrm{i.i.d}}{\sim} \mathcal{N}(0, 1)$, for all $n \in \mathcal{N}$, $u \in [m_{n}]$ and $v \in [d]$. The simulation parameters are set as
\begin{align}\label{eq: Well-conditioned Setting}
\begin{array}{lll}
        p = 20, &N = 100, &m_n = 200, \\
     d = 6, &\sigma^2 = 0.25, &\sigma^2_w = 1,\quad\gamma=0.5,\\
\end{array}
\end{align}
thus satisfying that all $X_{n}$ are full rank due to $m_n \gg d$ for all device $n$, which is called as well-conditioned setting.

Except GBMA, we also compare AirComp based FedSplit Algorithm with following algorithms: (i) original FedSplit algorithm; (ii) FedSGD algorithm, $e = 1$, which is the original version of GBMA without channel distortion, where $e$ is the number of local gradient steps. After defining these parameters, we run simulations according to (\ref{eq: linear model}) and (\ref{eq: loss function}) with Rayleigh channel gain and additive noise defined before. The simulation results are shown in Fig. \ref{fig: result1}.

As illustrated in Fig. \ref{fig: result1}, although both GBMA and AirComp based FedSplit achieve a linear convergence rate, GBMA has a larger error gap, i.e., $10^{-2}$ while AirComp based FedSplit can converge to a more accurate solution, i.e., $10^{-4}$. Besides, AirComp based FedSplit and original FedSplit have the same convergence rate at the beginning. However, due to fading channels and additive noise, there is still a boundary between the solution obtained by the former and the latter, which corresponds to our theoretical analysis.

\begin{figure}[tbp]
        \centering
        \includegraphics[width=0.8\linewidth]{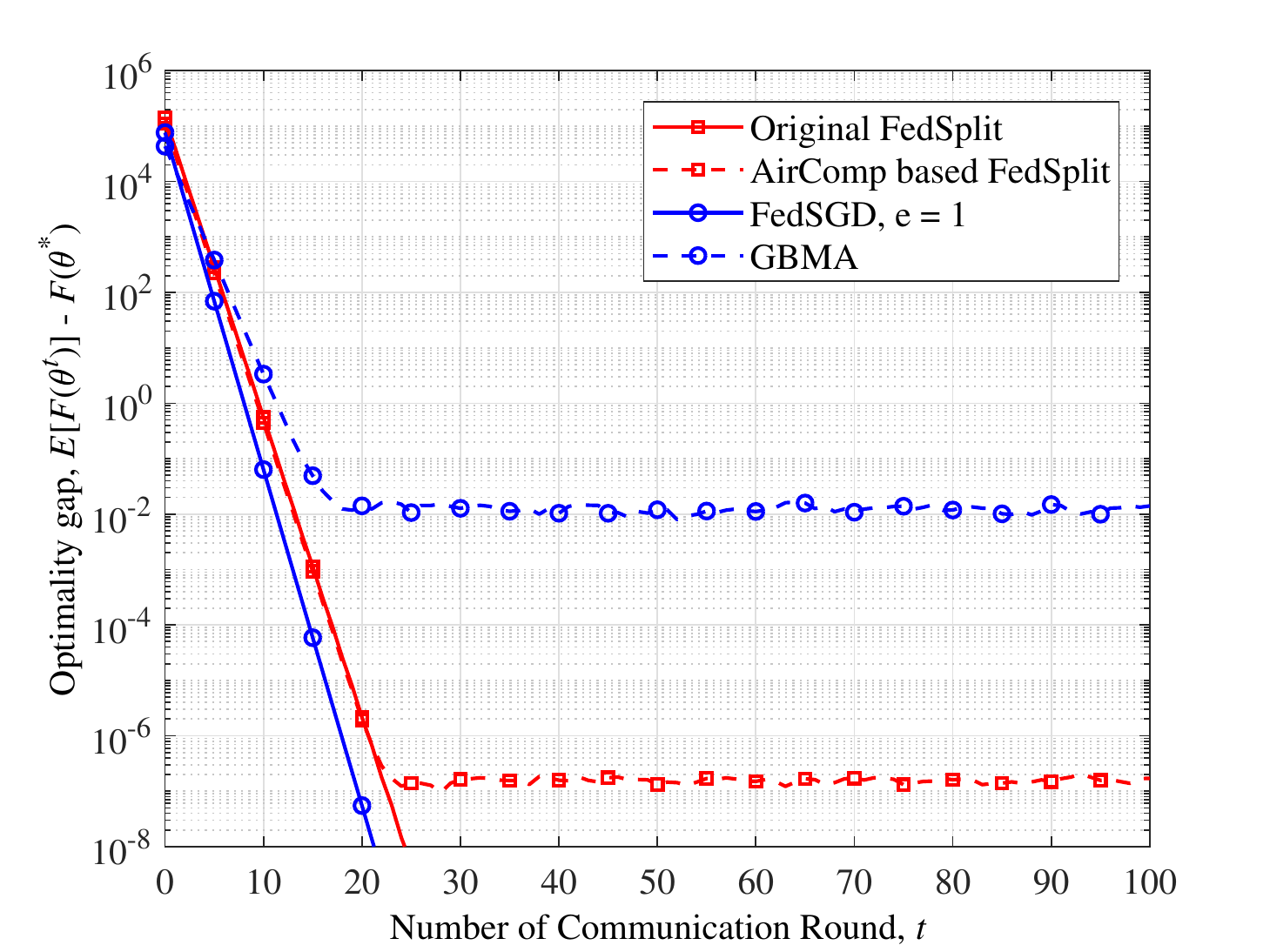}
        \caption{Simulation results for linear regression under well-conditioned setting, plotting log optimality gap versus number of communication round $t$.}
        \label{fig: result1}
        \vspace{-0.6cm}
\end{figure}

\begin{figure}[tbp]
        \centering
        \includegraphics[width=0.8\linewidth]{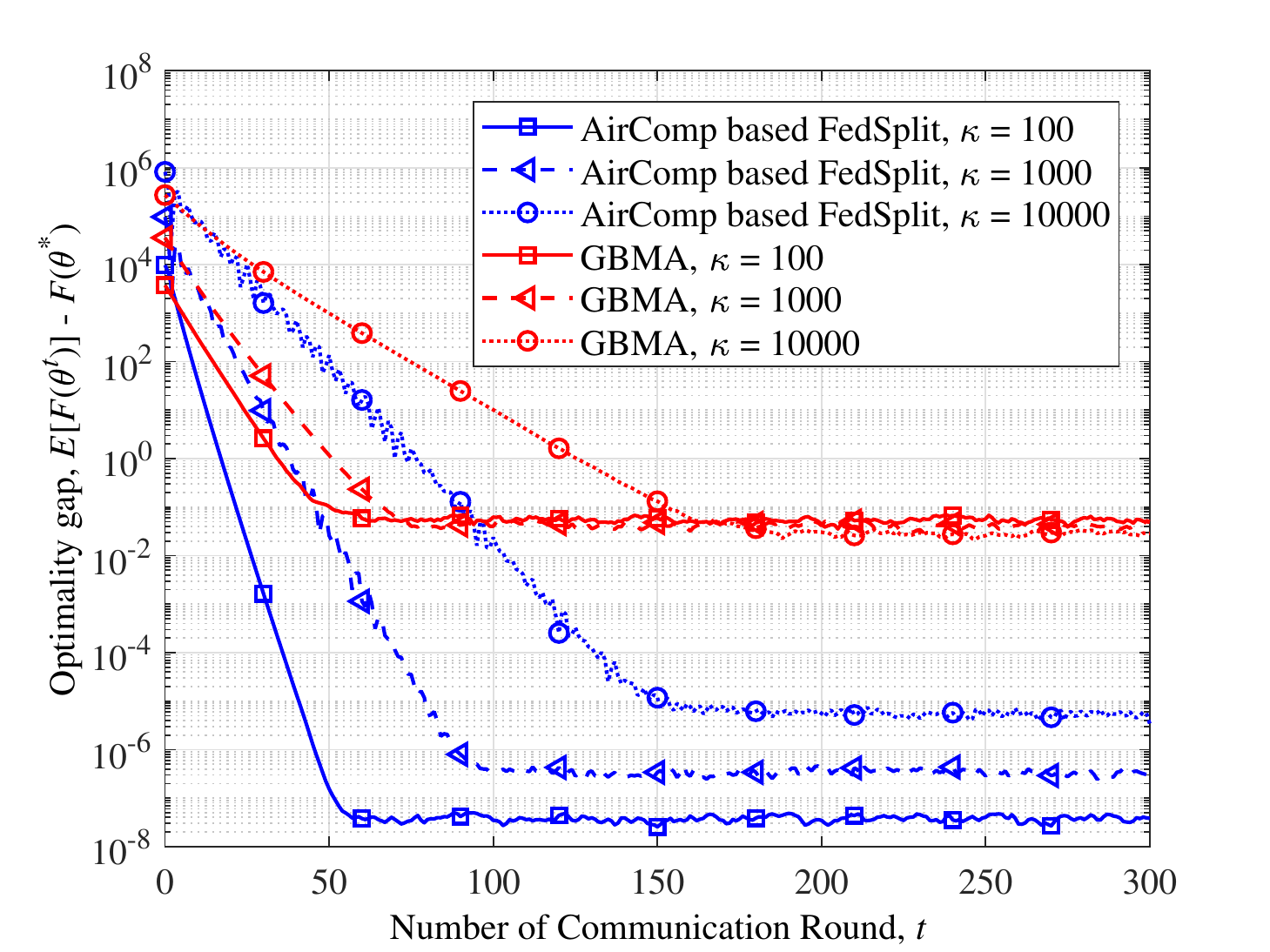}
        \caption{Simulation results for linear regression under ill-conditioned setting, plotting log optimality gap versus number of communication round $t$.}
        \label{fig: result2}
        \vspace{-0.6cm}
\end{figure}

\subsection{Ill-conditioned Setting}\label{sec: ill-conditioned}
%

To verify the effectiveness of our algorithm under ill-conditioned setting, we consider the linear regression problem with different ill-conditioned setting in term of the condition number $\kappa$. The detail of generating the design matrix $X_n$ under different condition numbers can be found in \cite[Section 4.3]{2020arXiv200505238P}.
Similarly, we use i.i.d Rayleigh channel gains and additive Gaussian noise for each device to simulate real situation. The other parameters are set as follows,
\begin{align}\label{eq: Ill-conditioned Setting}
\begin{array}{llll}
        p = 20, &N = 100, &m_n = 200, &\\
     d = 6, &\sigma^2 = 1, &\sigma^2_w = 1, &\gamma=0.5.
\end{array}
\end{align}

By running experiments for $\kappa \in \mathcal{K} = \{10^2,10^3,10^4\}$, we are able to plot the log gaps of different $\kappa$ from the two algorithms versus communication round $t$. Fig. \ref{fig: result2} shows that all the algorithms can achieve linear convergence rate. Besides, when the condition number $\kappa$ increases, the convergence will slow down. In particular, the convergence rate of AirComp based Fedsplit is less sensitive than GBMA in ill-condition cases, which means AirComp based Fedsplit is more robust to the condition number than GBMA. In other words, for ill-conditioned problems and wireless environment, AirComp based Fedsplit is faster and able to achieve higher accuracy compared with GBMA.

\section{Conclusion}
In this paper, we studied a wireless FL problem over a noisy fading MAC. To tackle the performance degradation in ill-conditioned settings, we proposed the AirComp based FedSplit algorithm, where the edge server recovered the noisy aggregation of local models transmitted by the end devices via AirComp. We provided the convergence analysis for the proposed algorithm that linearly converges to the optimal solutions for strongly convex and smooth loss functions. The robustness of the proposed algorithm to ill-conditioned problems with fast convergence was verified by theoretical results and numerical experiments.

\appendices
\section{Proof of Theorem \ref{thm: theorem 1}} \label{sec: proof of theorem 1}

According to Assumption 1, the objective function $F$ is also $L$-smooth with the Lipschitz constant $L = \sum_{n=1}^N L_n$. By the smoothness of the objective function $F$, we have that 
\begin{align}\label{eq: smoothness of F}
        \mathbb{E}[F(\hat{\bm{\theta}}^t)]  - F(\bm{\theta}^*) \leq \frac{L}{2}\mathbb{E}\left[\|\hat{\bm{\theta}}^t-\bm{\theta}^*\|^2\right].
\end{align}
By introducing the auxiliary sequence $\{\bm{\theta}^t\}$, we can rearrange the error term $\mathbb{E}\left[\|\hat{\bm{\theta}}^t-\bm{\theta}^*\|^2\right]$ as follows
\begin{align*}
        &\mathbb{E}\left[\|\hat{\bm{\theta}}^t-\bm{\theta}^*\|^2\right]= \mathbb{E}\left[\|\hat{\bm{\theta}}^t-\bm{\theta}^t + \bm{\theta}^t-\bm{\theta}^*\|^2\right]\\
        &= \mathbb{E}\left[\|\hat{\bm{\theta}}^t-\bm{\theta}^t\|^2\right]+ 2\mathbb{E}\left[(\hat{\bm{\theta}}^t-\bm{\theta}^t)^\top(\bm{\theta}^t-\bm{\theta}^*)\right]\\ 
        &+ \mathbb{E}\left[\|\bm{\theta}^t-\bm{\theta}^*\|^2\right].
\end{align*}
According to \eqref{eq: recover} and Assumption \ref{thm: assumption 3}, it is easy to verify that $\hat{\bm{\theta}}^t$ is an unbiased estimator of $\bm{\theta}^t$ as follows,
\begin{align*}
        \mathbb{E}[\hat{\bm{\theta}}^t] &= \mathbb{E}\left[\frac{1}{B}\sum_{n\in\mathcal{B}^t}\bm{\theta}_n^t + \tilde{\bm{w}}\right]
        = \mathbb{E}\left[\frac{1}{B}\sum_{n\in\mathcal{B}^t}\bm{\theta}_n^t\right] + \mathbb{E}[\tilde{\bm{w}}]\nonumber \\
        &\overset{(\mathrm{i})}{=} \frac{1}{B}\mathbb{E}\left[\sum_{n\in\mathcal{B}^t}\bm{\theta}^t_n\right] \overset{(\mathrm{ii})}{=} \frac{1}{N}\sum_{n=1}^N \bm{\theta}^t_n = \bm{\theta}^t,
\end{align*}
where $(\mathrm{i})$ comes from that $\tilde{\bm{w}}^t$ is zero-mean and $(\mathrm{ii})$ can be easily derived from \cite[Lemma 4]{li2019convergence}. Hence, 
\begin{align*}
\mathbb{E}\left[(\hat{\bm{\theta}}^t-\bm{\theta}^t)^\top(\bm{\theta}^t-\bm{\theta}^*)\right] = 0,
\end{align*}
which further implies 
\begin{align*}
        \mathbb{E}\left[\|\hat{\bm{\theta}}^t-\bm{\theta}^*\|^2\right] = 
        \mathbb{E}\left[\|\hat{\bm{\theta}}^t-\bm{\theta}^t\|^2\right] + 
        \mathbb{E}\left[\|\bm{\theta}^t-\bm{\theta}^*\|^2\right].
\end{align*}
The error term $\mathbb{E}\left[\|\hat{\bm{\theta}}^t-\bm{\theta}^*\|^2\right]$ can be divided into two terms $\mathcal{A}_1 = \mathbb{E}\left[\|\hat{\bm{\theta}}^t-\bm{\theta}^t\|^2\right]$ and $\mathcal{A}_2 = \mathbb{E}\left[\|\bm{\theta}^t-\bm{\theta}^*\|^2\right]$. In the following, we establish the upper bounds for these two terms respectively in oder to bound the error term.

Recall that $\hat{\bm{\theta}}^t=\bar{\bm{\theta}}^t + \tilde{\bm{w}}^t$ in \eqref{eq: recover}, we obtain that 

\begin{align*}
        \mathcal{A}_1 &= \mathbb{E}\left[\left\|
        \bar{\bm{\theta}}^t - 
        \bm{\theta}^t 
        \right\|^2\right]+2\mathbb{E}\left[\left(\bar{\bm{\theta}}^t - 
        \bm{\theta}^t \right)^\top\tilde{\bm{w}}^t\right] +\mathbb{E}\left[\|\tilde{\bm{w}}^t\|^2\right].
\end{align*}

Since the equivalent noise $\tilde{\bm{w}}^t$ is independent of the models and zero-mean, the term $\mathbb{E}\left[\left(\bar{\bm{\theta}}^t - 
        \bm{\theta}^t \right)^\top\tilde{\bm{w}}^t\right]$ is zero. In addition, the noise also satisfies 
        \begin{align}\label{eq: expected norm of w}
                \mathbb{E}\left[\|\tilde{\bm{w}}^t\|^2\right] = \frac{d\sigma^2_w}{\alpha^tB^2} \leq \frac{d\sigma_w^2G^2}{\gamma^2B^2P_0},
        \end{align}
        where the last inequality comes from that the definition of $\alpha_t$ and Assumption \ref{thm: assumption 2}.

        To further bound the term $\mathbb{E}\left[\left\|
        \bar{\bm{\theta}}^t - 
        \bm{\theta}^t 
        \right\|^2\right]$, we refer to the proof of \cite[Lemma 5]{li2019convergence} and Assumption \ref{thm: assumption 2}. Then, 
        \begin{align}\label{eq: expected norm of diff}
                &\mathbb{E}\left[\left\|\bar{\bm{\theta}}^t-\bm{\theta}^t\right\|^2\right] 
                = \mathbb{E}\left[\left\|\frac{1}{B}\sum_{n\in\mathcal{B}^t}\bm{\theta}^t_n-\bm{\theta}^t\right\|^2\right] \nonumber \\
                &= \frac{1}{B^2} \mathbb{E}\left[\left\|\sum_{i=1}^B(\bm{\theta}^t_{n_i}-\bm{\theta}^t)\right\|^2\right] = \frac{1}{B^2} \sum_{i=1}^B\mathbb{E}\left[\left\|\bm{\theta}^t_{n_i}-\bm{\theta}^t\right\|^2\right] \nonumber \\
                &= \frac{1}{B} \sum_{n=1}^N \frac{1}{N} \left\|\bm{\theta}^t_n-\bm{\theta}^t\right\|^2 \nonumber \\
                &= \frac{1}{B} \sum_{n=1}^N \frac{1}{N} \left(\|\bm{\theta}^t_n\|^2 - 2(\bm{\theta}^t_n)^\top\bm{\theta}^t + \|\bm{\theta}^t\|^2\right) \nonumber \\
                &= \frac{1}{BN} \sum_{n=1}^N \|\bm{\theta}^t_n\|^2 - \frac{1}{B}\|\bm{\theta}^t\|^2 
                \leq \frac{1}{BN} \sum_{n=1}^N \|\bm{\theta}^t_n\|^2 \leq \frac{1}{B}G^2.
        \end{align}
        Substitute \eqref{eq: expected norm of w} and \eqref{eq: expected norm of diff} into $\mathcal{A}_1$, we arrive at 
        \begin{align}
                \mathcal{A}_1 \leq  \frac{1}{B}G^2 + \frac{d\sigma_w^2G^2}{\gamma^2B^2P_0} = \frac{G^2}{B^2}\left(B + \frac{d\sigma_w^2}{\gamma^2P_0}\right).
        \end{align}
        As for the term $\mathcal{A}_2$, we exploit the result of \cite[Theorem 3]{2020arXiv200505238P} as follows
\begin{align*}
        \mathcal{A}_2 \leq \rho^{t}  \frac{1}{N}\sum_{n=1}^N\left\|\bm{\theta}_n^{0}-\bm{\theta}_n^*\right\|^2,
\end{align*}
where $\rho = \left(1-\frac{2}{\sqrt{\kappa}+1}\right)^2$. 
Combining the bounds of $\mathcal{A}_1$ and $\mathcal{A}_2$ with \eqref{eq: smoothness of F} yields the stated claim.


\bibliographystyle{IEEEtran}
\bibliography{reference}
\end{document}